%% file: main.tex
\documentclass[conference]{IEEEtran}

\IEEEoverridecommandlockouts

\input{header_traditional}

\usepackage{amsmath,amsfonts}
\usepackage{dsfont}
\usepackage[utf8]{inputenc}
\usepackage{amsthm}
\usepackage{svg}
\usepackage{enumitem}

\newcommand{\one}{\mathds{1}}

\newcommand{\ket}[1]{\lvert #1 \rangle}

\newcommand{\DHL}[1]{\mathfrak D(\mathcal H_{#1})}

\theoremstyle{plain}
\newtheorem{thm}{Theorem}[section]
\theoremstyle{plain}

\theoremstyle{plain}
\newtheorem{lem}[theorem]{Lemma}
\theoremstyle{plain}

\theoremstyle{remark}
\newtheorem*{rem}{Remark}
\theoremstyle{definition}

\begin{document}

\title{Quantum Feedback Cooling without State Filtering}

\author{Lorenzo Franceschetti, Francesco Ticozzi
\thanks{The authors are with the Department of Information Engineering, University of Padova, Italy. Emails: \texttt{lorenzo.franceschetti@studenti.unipd.it}, \texttt{ticozzi@dei.unipd.it}.}
\thanks{F.Ticozzi was supported by European Union through
NextGenerationEU, within the National Center for HPC, Big Data and
Quantum Computing under Projects CN00000013, CN 1, and Spoke
10.}}

\maketitle

\begin{abstract}
We introduce a state-based feedback law that stabilizes quantum states or subspaces associated with extremal values of a continuously monitored observable - a problem motivated by quantum cooling tasks. We then propose an output-based approximation that uses simple filtering of the measurement record to emulate the required feedback signal, thereby avoiding full real-time quantum state estimation, a key bottleneck for implementing and scaling filtering-based feedback control. The performance of the  resulting strategy is demonstrated numerically on two test-bed models for feedback cooling.
\end{abstract}

\begin{IEEEkeywords}
quantum information and control, stochastic systems.
\end{IEEEkeywords}

\thispagestyle{empty}
\pagestyle{empty}

\section{Introduction}
Feedback techniques represent a natural choice to obtain robust preparation of desired target states for quantum information processing and quantum cooling. The typical setup \cite{AltafiniIntroduction2012,wisemanmilburn,barchielli} is obtained by indirectly and continuously monitoring the target system and implementing a control action dependent on the measurement outcome. The continuous monitoring, due to measurement backaction, effectively makes conditional state evolution driven by a stochastic master equation \cite{wisemanmilburn,bouten,pellegrini,barchielli}. Two main design approaches have been developed: Wiseman and Milburn \cite{wiseman-markovfb} proposed a direct feedback scheme that can be interpreted as a {\em derivative feedback} controller \cite{gough-pid}, whose average evolution is a time-invariant semigroup and whose effectiveness has been characterized from a control theoretic-viewpoint in \cite{ticozziAnalysisSynthesisAttractive2009}. In particular, it is shown that it cannot prepare eigenstates of the measurement operator - a natural target in various contexts, for example when monitoring the system energy (its Hamiltonian) to obtain feedback cooling.
In contrast, {\em filtering-based feedback} (see e.g. \cite{bouten,AltafiniIntroduction2012} for a review of the theoretical framework) has been pioneered by Belavkin \cite{belavkin} and has been later developed into to a variety of control approaches \cite{mirrahimi, globstabfeed,destab,expstabqnd,weichaoexp,switching}. In this framework, the control law is able to stabilize {\em only} the measurement eigenstates. While filtering-based quantum feedback control has been successfully employed experimentally \cite{haroche}, its potential as a scalable control tool is seriously hindered by the necessity of obtaining a state estimate in real-time. Due to the quadratic scaling of the state in the system degrees of freedom, and its exponential scaling on the number of subsystems, it quickly becomes an unattainable task. An alternative approach has been proposed in \cite{expstabqnd}, where the control law does not require a full-state estimate, but only an estimate of the population of the eigenstates of the measurement operators. While reducing the computational burden, it still requires the filtering of a number of variables that is exponential in the number of subsystems, so it is not feasible for real-time control of larger systems.

In this work, we focus on the {\em measurement-eigenspace} stabilization problem and show that full-state reconstruction is not necessary when the target eigenspace is extremal, that is, when the associated eigenvalue is either the minimum or the maximum eigenvalue of the measurement operator. Optimal cooling and heating problems fit directly into this setting.
The contributions are two fold: we first develop a filtering-based control strategy that makes the target globally asymptotically stable, based only on {\em the expectation of the monitored observable}; and we next show how the latter can be approximated in practice by {\em a simple moving average} of the output signal. The approximation is asymptotically exact for the open loop dynamics, while its effectiveness under the controlled dynamics motivated by an ergodicity argument and  demonstrated via simulations. The control approach adopts a switching structure inspired by \cite{mirrahimi,destab}, adapted to the available information, and extended to ensure {\em subspace} stabilization. The methods are tested numerically on two prototypical examples.

\section{Model and Assumptions}

Let $\mathcal H$ be a Hilbert space and call $\mathfrak B(\mathcal H)$ the set of bounded linear operators acting on $\mathcal H$ and $\mathfrak h(\mathcal H)$ the subset of $\mathfrak B(\mathcal H)$ of hermitian operators. The state of the quantum system associated with $\mathcal H$ is represented by a \emph{density matrix} $\rho \in \mathfrak D(\mathcal H)$, where $\mathfrak D(\mathcal H) = \{\rho \in \mathfrak h(\mathcal H) | \rho \ge 0, Tr[\rho] = 1\}$ is the set of density matrices on $\mathcal H$. Given a decomposition of $\mathcal H = \mathcal H_S \oplus \mathcal H_R$, we define the set of states having support only on $\mathcal H_S$ as $\mathfrak D(\mathcal H_S) = \{\rho \in \mathfrak D(\mathcal H)\ |\ \Pi_S \rho = \rho\}$, where $\Pi_S$ is the orthogonal projection matrix onto $\mathcal H_S$. 

This work focuses on an $N$-level time-invariant open quantum system, associated with an $N$-dimensional Hilbert space $\mathcal H$, subject to unit-efficiency, continuous homodyne-type measurements, leading to a diffusive stochastic master equation (SME) \cite{AltafiniIntroduction2012,barchielli,wisemanmilburn}: the time evolution of the density matrix $\rho_t \in \mathfrak D(\mathcal H)$ can be expressed as a stochastic differential equation driven by a Wiener process of the form
\begin{equation}
    d\rho_t = -i[H_0 + f_tF_0, \rho_t]dt + \mathcal D_L[\rho_t]dt + \mathcal G_L[\rho_t]dW_t \label{model:ctrl}
\end{equation}
where $H_0 \in \mathfrak{h}(\mathcal H)$ is the free Hamiltonian of the system, $F_0$ a control Hamiltonian, $f_t$ a scalar control function, and $L \in \mathfrak B(\mathcal H)$ represents the coupling between the system and the measurement apparatus. The stochastic process $W_t$ is a Wiener process on a fixed filtered probability space $(\Omega, \mathcal F, \{\mathcal F_t^W\}, \mathbb P)$ and describes the quantum backaction due to continuous measurement on the system.
The measurement apparatus produces a measurement signal $y_t$, whose dynamics is described as \(
    dy_t = Tr[(L + L^\dagger)\rho_t]dt + dW_t.
\)

In the rest of the paper, we assume that \begin{enumerate}
    \item [$(A_1)$] $L$ is a \emph{hermitian} operator;
    \item [$(A_2)$] $[H_0, L] = 0.$
\end{enumerate} 
These assumptions generalize the paradigmatic case $H_0=L,$ where the energy is the quantity being measured, adding flexibility to the design. We call $\{\lambda_i\}_1^r$ the $r \le N$ distinct eigenvalues of $L$ in increasing order and $\mathcal H_\lambda$ the eigenspace of $L$ relative to its eigenvalue $\lambda$.

Our aim is to design an efficient control law that is able to globally asymptotically stabilise either $\DHL{\lambda_1}$ (\emph{cooling}) or $\DHL{\lambda_r}$ (\emph{heating}). Depending on the task, we add one of the two following spectral-separation assumptions: \begin{enumerate}
    \item [$(A_3^c)$] $Tr[L] < N\lambda_2$ in the cooling scenario
    \item [$(A_3^h)$] $Tr[L] > N\lambda_{r - 1}$ in the heating scenario
\end{enumerate}
{It is easy to construct an $L$ satisfying the assumptions above by shifting the non-extreme eigenvalues: whether or not it is a physically accessible observable is to be determined depending on the specific application scenario.}

\section{A Switching Control law for Cooling}
\label{idcontr}
The control law we discuss in the following exploits the open-loop behaviour of the system, monitoring its evolution and destabilising it whenever we detect the state is not converging to the desired set of states. Hence, we first need to assess the asymptotic behaviour of equation (\ref{model:ctrl}) when $f_t = 0$. {Each set} $\DHL{\lambda}$ is invariant with probability $1$ under assumptions $(A_1)$ and $(A_2)$. {This can be shown using Theorem (2.1) in \cite{Benoist2017}, noting that, in the block partition induced by  $\DHL{\lambda}$, the operators $H_0$ and $L$ are both block diagonal (thanks to $(A_2),$ they can actually be diagonalized simultaneously).} 
It can further be shown that $\rho_t$ converges with probability $1$ to one of the sets $\DHL{\lambda}$\footnotemark{} and that, given the initial condition $\rho_0$, the probability that $\rho_t$ converges to $\DHL{\lambda}$ is $Tr[\Pi_\lambda \rho_0]$, where $\Pi_\lambda$ is the orthogonal projection matrix onto $\mathcal H_\lambda$.
\footnotetext{The proof is similar to the one found in \cite{convergence}, in our case applied to the dynamical equation (\ref{model:ctrl}) with $f_t$ set to $0$. In particular, the function $V_t$ defined in equation (23) in \cite{convergence} becomes $V_t = Tr[L^2\rho_t] - Tr[L\rho_t]^2$, which is the variance process of $L$.}

Assume for now that we have access to the signal \(
    x_t = 2Tr[L\rho_t]
\),
the drift component of the actual output signal $y_t$. This signal can be used to distinguish whether $\rho_t$ is converging to one of the two extremal sets, $\DHL{\lambda_1}$ or $\DHL{\lambda_r}$, at least asymptotically. Indeed, by noting that $x_t/2$ is a convex combination of the eigenvalues of $L$, $\rho_t$ converges to $\DHL{\lambda_1}$ (resp. $\DHL{\lambda_r}$) {\em if and only if} $x_t$ converges to $2\lambda_1$ (resp. $2\lambda_r$).

Since $\rho_t$ converges with probability $1$ to one of the $\DHL{\lambda}$, $x_t$ is enough to distinguish whether we are converging to the desired set or not. We then consider a controlled system whose evolution is given by \eqref{model:ctrl}
where $f_t=f(x_t)$ is a static deterministic function of $x_t.$ 

We focus on the ``cooling'' task to simplify the presentation: conversion to ``heating'' is straightforward. In this setting, the control law is provided in the following theorem: \begin{thm}
    Under assumptions $(A_1), (A_2)$ and $(A_3^c)$, consider the controlled system (\ref{model:ctrl}), with $H_0, L, F_0 \in \mathfrak h(\mathcal H)$. Let $V(\rho) = Tr[L\rho] - \lambda_1$ and $\delta = \lambda_2 - \lambda_1$. Let $\gamma \in (0, \delta)$ and define $f(x)$ as follows: \begin{itemize}
        \item $f(x) = 1$ if $x \ge 2\delta - \gamma + 2\lambda_1$
        \item $f(x) = 0$ if $x \le 2\delta - 2\gamma + 2\lambda_1$
        \item if $x \in (2\delta - 2\gamma + 2\lambda_1, 2\delta - \gamma + 2\lambda_1)$, then $f(x) = 1$ if $x$ last entered the interval from above, else $f(x) = 0$.
    \end{itemize}
    Assume that $F_0$ is such that, when $f(x) = 1$, the associated average equation \begin{equation}
        \dot \sigma_t = \mathscr L[\sigma_t] = -i[H_0 + F_0, \sigma_t] + \mathcal D_L[\sigma_t] \label{lind:avg}
    \end{equation}
    has $\rho_{mix} = \frac{1}{N}\one$ as its unique equilibrium. Then there exists a $\gamma \in (0, \delta)$ such that $\rho_t$ almost surely (a.s.) converges to 
    $\mathfrak D(\mathcal H_{\lambda_1}^L).$
    \label{convthm}
\end{thm}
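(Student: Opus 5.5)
The plan is a switching, Lyapunov-type argument in the style of \cite{mirrahimi,destab}. We use $V(\rho)=Tr[L\rho]-\lambda_1=x/2-\lambda_1\ge 0$ as the guiding function and exploit the open-loop convergence and invariance facts stated above. The threshold $2\delta-\gamma+2\lambda_1$ corresponds to $V=\delta-\gamma/2$, and $2\delta-2\gamma+2\lambda_1$ to $V=\delta-\gamma$. Since $V(\rho)\ge \delta\,(1-Tr[\Pi_{\lambda_1}\rho])$, the region $\{V\le\delta-\gamma\}$ forces $Tr[\Pi_{\lambda_1}\rho]\ge \gamma/\delta>0$. So whenever the control is off, the state sits in a region with a uniformly positive weight on the target eigenspace.

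\textbf{Step 1 (behaviour when $f=0$).} Here the dynamics is the open-loop SME. By the convergence result quoted above, $\rho_t$ converges a.s.\ to some $\DHL{\lambda}$, and it reaches $\DHL{\lambda_1}$ with probability $Tr[\Pi_{\lambda_1}\rho_0]$. Moreover $Tr[\Pi_{\lambda}\rho_t]$ is a bounded martingale, because $[\Pi_\lambda,H_0]=[\Pi_\lambda,L]=0$ makes its drift vanish. If $\rho_t$ converges to any $\DHL{\lambda}$ with $\lambda\ge\lambda_2$, then $V\to\lambda-\lambda_1\ge\delta$, so $x$ eventually exceeds the upper threshold and the control switches on. Hence each open-loop phase started in $\{V\le\delta-\gamma/2\}$ either converges to the target forever, which happens with probability at least $\gamma/\delta$ by the martingale/optional stopping argument applied to $Tr[\Pi_{\lambda_1}\rho]$, or ends in finite time with a switch to $f=1$.

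\textbf{Step 2 (behaviour when $f=1$).} The average dynamics \eqref{lind:avg} has $\rho_{mix}$ as its unique, hence globally attracting, equilibrium. At $\rho_{mix}$ we have $V=Tr[L]/N-\lambda_1<\delta$ by $(A_3^c)$. We choose $\gamma$ small enough that $Tr[L]/N-\lambda_1<\delta-\gamma$ with some margin. By continuity of $V$ and the convergence of $\mathbb E[\rho_t]=e^{t\mathscr L}\rho_0$, uniformly on the compact set $\mathfrak D(\mathcal H)$, there is a finite $T$ with $\mathbb E[V(\rho_T)]$ below the lower threshold uniformly in the initial condition. A support/Stroock--Varadhan or Markov-inequality argument on the stochastic flow then shows that, from any initial state, the controlled trajectory reaches $\{V\le\delta-\gamma\}$ within time $T$ with probability at least some $p>0$ independent of the initial state. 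Iterating over windows of length $T$ via the strong Markov property gives a finite exit time a.s., so every $f=1$ phase ends and the control turns off. This step is the main obstacle. Controlling the mean is not enough, because $V$ could hit the threshold only in expectation. I would first try to bound the probability directly, using $\mathbb E[V(\rho_T)]\le \delta-\gamma-\epsilon$ together with $V\ge0$ to get $\mathbb P(V(\rho_T)\le \delta-\gamma)\ge \epsilon/(\delta-\gamma)$, which is uniform in the initial state. If needed I would replace $T$ by a stopping argument.

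\textbf{Step 3 (conclusion).} Each $f=0$ phase starts in $\{V\le\delta-\gamma\}$ and, by the strong Markov property and Step 1, is the final phase with conditional probability at least $\gamma/\delta$. By Step 2 every $f=1$ phase terminates a.s., so a Borel--Cantelli/geometric argument shows that only finitely many switches occur a.s. The hysteresis band guarantees switches are separated in time, so no chattering accumulation occurs and the solution is well defined. In the last phase $f\equiv0$, so $\rho_t$ converges to some $\DHL{\lambda}$. It cannot be $\lambda\ge\lambda_2$, since that would trigger another switch by Step 1. Hence $\rho_t\to\DHL{\lambda_1}$ almost surely.
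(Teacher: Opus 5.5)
Your overall architecture matches the paper's: every $f=1$ phase terminates a.s., and every $f=0$ phase begun at a switch-off time is final with a uniformly positive probability, so only finitely many switches occur. Step 2 is fine. It is a more direct version of Lemmas \ref{Texists}--\ref{lem46mod}: choosing $\gamma<\lambda_2-Tr[L]/N$ and using uniform convergence of $e^{t\mathscr L}\rho\to\rho_{mix}$ on $\mathfrak D(\mathcal H)$ gives the uniform $T$ at once, and it also yields an explicit admissible range for $\gamma$. Markov's inequality followed by iteration over windows of length $T$ is essentially the mechanism the paper imports from Lemmas 4.5--4.6 of \cite{mirrahimi}.

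The gap is the per-phase bound in Step 1.

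\begin{itemize}
\item \textbf{Wrong event.} The martingale $Tr[\Pi_{\lambda_1}\rho_t]$ gives the probability that the open-loop trajectory \emph{eventually} converges to $\DHL{\lambda_1}$. A phase is final only if the trajectory gets there \emph{without ever reaching} $V=\delta-\gamma/2$. Open-loop paths that reach the target after an excursion above that level are counted in your bound, yet they switch the control on.
\item \textbf{The constant is false, not just unproved.} At every switch-off time $V=\delta-\gamma$ exactly, by path continuity. The computation below shows the probability that the phase is final is then exactly $\gamma/(2\delta-\gamma)$. Since $\gamma<\delta$, this is strictly smaller than $\gamma/\delta$, for every system and every $\gamma$.
\item \textbf{Wrong starting set.} No uniform bound can hold for phases started anywhere in $\{V\le\delta-\gamma/2\}$, because this probability tends to $0$ as $V_0\to\delta-\gamma/2$.
\end{itemize}

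The fix is to stop $V$ itself. Under $f=0$ it is a bounded martingale, being a linear combination of your $Tr[\Pi_\lambda\rho_t]$. Let $a=\delta-\gamma/2$ and let $\tau$ be the hitting time of $\{V\ge a\}$.

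\begin{itemize}
\item On $\{\tau<\infty\}$, $V_\tau=a$ by continuity.
\item On $\{\tau=\infty\}$, the open-loop limit satisfies $V_\infty\le a<\delta$. The only possible limit values are $0$ and values $\ge\delta$, so $V_\infty=0$.
\end{itemize}

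Bounded optional stopping gives $V_0=a\,\mathbb P(\tau<\infty)$. Hence $\mathbb P(\tau=\infty)=1-V_0/a\ge\gamma/(2\delta-\gamma)$ whenever $V_0\le\delta-\gamma$. This is the paper's Lemma \ref{upboundprob} (Doob's maximal inequality via Theorem 2.2 of \cite{mirrahimi}). With this constant in place of $\gamma/\delta$, your Step 3 goes through unchanged.
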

Before providing the proof, some additional remarks and auxiliary results are in order: (i) the third switching condition prevents chattering; 
(ii) if $L$ is hermitian, $\rho_{mix}$ is always an equilibrium for \eqref{lind:avg}, independently of  $F_0$. Hence, we need to design it to ensure that no other equilibrium is present. When this is possible within the active control constraints, the following lemma ensures that picking a random $F_0$ is sufficient. 
\begin{lem}
    Under assumptions $(A_1)$ and $(A_2)$, consider a parametric $F_0(w) : \mathbb R^n \rightarrow \mathfrak h(\mathcal H)$, whose components are analytic in $w \in \mathbb R^n$. If there exists $\bar w \in \mathbb R^n$ such that $\rho_{mix}$ is the unique equilibrium for (\ref{lind:avg}), then it is the unique equilibrium for almost all $w \in \mathbb R^n$. 
\end{lem}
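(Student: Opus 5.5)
The approach is to turn uniqueness of the equilibrium into a rank condition on a linear map that depends analytically on $w$, and then use the fact that the zero set of a nontrivial real-analytic function on $\mathbb R^n$ has Lebesgue measure zero.

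For each $w$, define the Lindbladian
\[
\mathscr L_w[\sigma] = -i[H_0+F_0(w),\sigma] + \mathcal D_L[\sigma]
\]
as a linear map on the real vector space $\mathfrak h(\mathcal H)$, which has dimension $N^2$. In a fixed basis it is represented by a real matrix $M(w)$ of size $N^2\times N^2$. Each entry of $M(w)$ is affine in the entries of $F_0(w)$, so it is analytic in $w$. By remark (ii), $L$ is hermitian, hence $\mathscr L_w[\one]=0$ for every $w$. So $\one\in\ker M(w)$ always, and $\mathrm{rank}\, M(w)\le N^2-1$.

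The next step is to show that $\rho_{mix}$ is the unique equilibrium in $\mathfrak D(\mathcal H)$ if and only if $\dim\ker M(w)=1$, that is, $\mathrm{rank}\,M(w)=N^2-1$.

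One direction is immediate. If the kernel is spanned by $\one$, then any trace-one equilibrium must equal $\frac1N\one$.

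For the other direction, suppose $\dim\ker M(w)\ge 2$. Then there is a hermitian $X\in\ker M(w)$ with $Tr[X]=0$ and $X\neq 0$. This uses two facts: $\mathscr L_w$ preserves hermiticity, so its kernel is spanned by hermitian elements, and the kernel is not spanned by $\one$ alone. For $\epsilon$ small, $\rho_{mix}+\epsilon X$ is positive, has unit trace, and is an equilibrium different from $\rho_{mix}$. This contradicts uniqueness.

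This equivalence is the step that needs care. It relies on $\rho_{mix}$ being full rank, so that small hermitian perturbations stay positive.

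Now consider the function $g(w)=\sum_{\mu}(\det M_\mu(w))^2$, where $M_\mu(w)$ ranges over all $(N^2-1)\times(N^2-1)$ minors of $M(w)$. Then $g$ is real-analytic on $\mathbb R^n$. Moreover $g(w)\neq 0$ exactly when $\mathrm{rank}\,M(w)=N^2-1$, which by the previous step means $\rho_{mix}$ is the unique equilibrium.

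By hypothesis $g(\bar w)\ne 0$, so $g$ is not identically zero. A real-analytic function on the connected set $\mathbb R^n$ that is not identically zero has a zero set of Lebesgue measure zero. This is the standard result, proved by induction on $n$ using Fubini's theorem together with the fact that one-variable analytic functions have isolated zeros. Hence uniqueness holds for almost every $w$.

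Assumption $(A_2)$ plays no essential role in this argument beyond fixing the setting. The only real subtlety is the kernel/equilibrium equivalence, and in particular restricting the analysis to the real space $\mathfrak h(\mathcal H)$ so that the determinant condition is a real-analytic function of $w$.
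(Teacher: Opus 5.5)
Your proof is correct and follows the same route as the paper: reduce uniqueness of $\rho_{mix}$ to the condition $\mathrm{rank}\,\mathscr L_w = N^2-1$, then use that the rank-deficient set of an analytically parametrized matrix family is Lebesgue-null once it misses a single point $\bar w$. The paper asserts the kernel/equilibrium equivalence without proof and cites Lemma 5 of its reference for the measure-zero step; you prove both directly, via the perturbation $\rho_{mix}+\epsilon X$ and the sum of squared $(N^2-1)$-minors.
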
 \begin{proof}
    Consider $\mathscr L_w[\sigma_t] = -i[H_0 + F_0(w), \sigma_t] + \mathcal D_L[\sigma_t].$
    
    The state $\rho_{mix}$ is always an equilibrium of the average dynamics, so (in matrix representation) $\mathrm {rank}\mathscr L_w \le N^2 - 1$ for all $w$; uniqueness of $\rho_{mix}$ as equilibrium is equivalent to $\mathrm {rank } \mathscr L_w = N^2 - 1$.
    By assumption, there exists $\bar w \in \mathbb R^n$ such that $\mathrm{rank } \mathscr L_{\bar w} = N^2 - 1$. By Lemma (5) in \cite{zeromeas}, the set of $w \in \mathbb R^n$ for which the rank is strictly less than $N^2 - 1$ has zero measure with respect to the Lebesgue measure.
\end{proof}

The additional supporting lemmas needed in the proof are reported in Appendix (\ref{app}).

\begin{proof}[Proof of Theorem (\ref{convthm})]
    {Define the set \begin{equation}
        \mathfrak D_{> \alpha} = \{\rho \in \mathfrak{D}(\mathcal H) | V(\rho) > \alpha\}.
    \end{equation} Since condition $\rho_t \in \mathfrak D_{> \alpha}$ is equivalent to $Tr[L\rho_t] - \lambda_1 > \alpha$,  we obtain
\(
        \rho_t \in \mathfrak D_{> \alpha} \iff x_t > 2\alpha + 2\lambda_1.
 \)
     We can thus rewrite the control function as \begin{itemize}
        \item $f(\rho) = 1$ if $\rho \in \mathfrak D_{\ge \delta - \frac{\gamma}{2}}$
        \item $f(\rho) = 0$ if $\rho \in \mathfrak D_{\le \delta - \gamma}$
        \item if $\rho \in \{\rho \in \mathfrak D(\mathcal H) | V(\rho) \in (\delta - \gamma, \delta - \frac{\gamma}{2})\}$, then $f(\rho) = 1$ if $x$ last entered the set from $\mathfrak D_{\ge \delta - \frac{\gamma}{2}}$, else $f(\rho) = 0$.
    \end{itemize}}

    By Lemma (\ref{lem45mod}), there exists a $\gamma \in (0, \delta)$ such that, when $f(\rho_t) = 1$ and $\rho_t \in \mathfrak D_{> \delta - \gamma}$, the trajectory exits the superlevel set in finite time with probability $1$.

    Let $\mathbb T$ be the ordered sequence of time instants where $f(\rho_t)$ switches from $1$ to $0$. Each instant $t \in \mathbb T$ represents the first instant when $\rho_t$ enters $\mathfrak D_{\le \delta - \gamma}$ having previously entered $\mathfrak D_{\ge \delta - \frac{\gamma}{2}}$. The state trajectory takes a finite time to go from $\mathfrak D_{\le \delta - \gamma}$ to $\mathfrak D_{\ge \delta - \frac{\gamma}{2}}$, since $\rho_t$ is a continuous function and it has to traverse a finite non-zero measure set, so we can conclude that $\mathbb T = \{t_k\}$ is a countable set.

    Assume by contradiction that $\rho_t$ never converges to $\DHL{\lambda_1}$. With probability $1$, this happens if and only if at each instant $t_k$ the trajectory $\rho_t$ enters $\mathfrak D_{\ge \delta - \frac{\gamma}{2}}$ in finite time. Indeed, all equilibria for the uncontrolled dynamics that are not in $\DHL{\lambda_1}$ live in $\mathfrak D_{\ge \delta}$ and by Lemma (\ref{lemfiniteentrance3}) the trajectory must enter $\mathfrak D_{\ge \delta - \frac{\gamma}{2}}$ in finite time with probability $1$. This in turn is equivalent to $\mathbb T$ being infinite, indeed if there were a finite number of instants in $\mathbb T$, then the trajectory would converge to $\DHL{\lambda_1}$ with probability $1$ as desired.

    Let $A_k$ be the event representing the fact that $\rho_t$ enters $\mathfrak D_{\ge \delta - \frac{\gamma}{2}}$ in finite time starting from $\rho_{t_k} \in \mathfrak D_{\le \delta - \gamma}$. The probability of never converging to $\DHL{\lambda_1}$ is equal to the probability of the intersection of all $\{A_k\}_{k \ge 1}$, which can be bounded using the law of total probability as follows: \begin{align}
        \mathbb P \left[\bigcap_{k = 1}^{+\infty} A_k\right] \nonumber  &= \prod_{k = 1}^{+\infty} \mathbb P[A_k | A_1, \dots, A_{k - 1}] \nonumber \\ &
        \le \prod^{+\infty}_{k = 1} \left(1 - \frac{\gamma}{2\delta - \gamma}\right) = 0
    \end{align}
    where we have used Lemma (\ref{upboundprob}) to bound the conditional probability $\mathbb P[A_k | A_1, \dots A_{k - 1}]$, noting that conditioning on $\{A_1, \dots, A_{k - 1}\}$ is equivalent to assuming that $\rho_{t_k} \in \mathfrak D_{\le \delta - \gamma}$.    

    We can conclude that $\rho_t$ converges to $\DHL{\lambda_1}$ with probability $1$.
    \end{proof}
\begin{rem}
The proposed control law also makes $\DHL{\lambda_1}$ stable. This can be shown using the same reasoning as in the first point of the proof of Theorem (4.2) of \cite{mirrahimi} using $u_1(\rho) = 0$. We can conclude that the control law that we have designed is able to render $\DHL{\lambda_1}$ globally asymptotically stable (GAS) \cite{vanhandel}. To convert these results to the ``heating'' case, we need to redefine $V(\rho) = Tr[L\rho] - \lambda_r$, the inequality signs in the definition of $f$ need to be inverted, $\delta$ becomes $\lambda_{r - 1} - \lambda_{r} < 0$ and $\gamma$ is now in the interval $(\delta, 0)$.
\end{rem}

\section{Implementation without state filtering}
Our control law is able to render the ground subspace GAS, provided that the signal $x_t$ is available. However, in actual applications, this signal is not directly available: one (computationally heavy) way to obtain it would be to estimate the {\em full} state $\rho_t$ using the output signal $y_t$, that is, to use filtering-based feedback \cite{wisemanmilburn, AltafiniIntroduction2012}. 

A different strategy consists in estimating $x_t$ directly from $y_t$. Consider the {\em open-loop evolution}: the state asymptotically converges \cite{convergence} to one of the invariant sets $\DHL{\lambda}$, and thus the measured signal asymptotically becomes $y_t = 2\lambda t + W_t$, since $Tr[L\rho] = \lambda$ for all $\rho \in \DHL{\lambda}.$ We can then obtain an (``ergodic'') estimate of $x_t$ as \begin{equation}
    \hat x_t = \frac{y_t}{t} = \bar x_t + \tilde W_t
    \vspace{-3mm}
\end{equation}
where we define $\bar x_t=t^{-1}\int_0^tx_\tau d\tau$ and $\tilde W_t = \frac{W_t}{t}$. 
The Gaussian random variable $\tilde W_t$ has zero mean and variance $\frac{1}{t}$. This means that $\hat x_t$ converges asymptotically to $x_t=2\lambda$  in the long time limit. Obtaining such a signal requires minimal processing, especially compared to integrating the SME in real time.
The first approach we consider thus consists in replacing the $x_t$ signal in our control law with the signal $\hat x_t$. To ensure that the controller does not operate when the noise is dominant (for $t$ close to $0$, the variance of $\tilde W_t$ tends to infinity), we introduce an initial phase where the control function $f$ is set to $0$ independently of what we measure. In particular, we wait to activate the controller until $\tau_s$, defined as \begin{equation}
    \tau_s = \inf_{t > 0} \{t\ |\ \mathbb P[|\tilde W_t| \ge \varepsilon] \le 1 - \beta\} \label{taus}
\end{equation}
where $\beta \in (0, 1)$ and $\varepsilon > 0$ are two design parameters that influence the initial delay and require tuning. In particular, we can compute it in closed form as $\tau_s = \left[\frac{1}{\varepsilon}\Phi^{-1}\left(\frac{1 + \beta}{2}\right)\right]^2$, where $\Phi$ is the cumulative distribution function of a standard Gaussian random variable.

While being a computationally efficient method, it can lead to poor performances. Indeed, we are averaging all the output increments, rendering the estimate less responsive to the dynamics of the system, in particular in the presence of fast transients or rare events. To overcome this issue, we propose a rolling-average scheme, where instead of considering the signal $y_t$, we consider its windowed version: \begin{equation}\label{eq:window}
    y^{\Delta}_t = \int_{\max\{t - \Delta, 0\}}^t dy_t
\end{equation}
where $\Delta$ is the window length. From this, we consider the estimate \begin{equation}
    \hat x^\Delta_t = \frac{y^\Delta_t}{\min\{t, \Delta\}}.
\end{equation}
The $\max$ and $\min$ are used to deal with the initial phase when the window is not yet full.

The windowing allows us to focus on the most recent measurements, which are the most relevant for our purposes. The window length influences the controller behaviour and requires tuning: a shorter window leads to a more responsive controller, at the expense of a larger noise component. In contrast, a longer window reduces the impact of noise, with the downside of reducing responsiveness and increasing the memory footprint, since we need to store all the measured increments in the window to compute the signal $\hat x^\Delta_t$. 

Using these approximations within our feedback strategy is justified, since: (1) in between two control activation periods the system effectively evolves in open loop, so the trajectory must approach one of the equilibria. Hence $\hat x_t, x^\Delta_t$ tend to $x_t=2\lambda$ as desired; (2) When the control is active, by Theorem 4 in \cite{ergmean}, $\frac{1}{t}y_t$ converges a.s. to the value of $x_t$ for the unique equilibrium $\rho_{mix}$, thus entering a.s. the no-control zone due to assumption $(A_3^c),$ as in Lemma \ref{Texists}. A detailed convergence proof goes beyond the scope of this work and will be presented elsewhere.

\section{Simulations}
To evaluate the practical implementation of the control law, we analyze the controlled behavior of two different systems. To satisfy the physical constraints on the density matrix $\rho_t$, we employ the integration scheme described in \cite{rouchon}, with a step size of $\Delta t = 0.0001$.

In the figures, \textit{free evolution} refers to the behavior of the open-loop system. \textit{Ideal signal} represents the behaviour using $x_t$ as described in Section (\ref{idcontr}) and is used as a benchmark reference. \textit{Ergodic signal} refers to the controller that uses the $\hat x_t$ signal, while \textit{windowed signal} refers to the use of the $\hat x_t^\Delta$ signal, with $k$ being the window length in number of samples, so $\Delta = k\Delta t$. The plotted trajectories are obtained as the average of $1000$ trajectories starting from different initial conditions, each of which is the average of $20$ single-run trajectories starting from the same initial condition.

{\em Qutrit system:} The first system that we study is a $3$-dimensional system with Hamiltonian $H_0 = \mathrm{diag}(-1, 2, 3)$ and $L = H_0$. The measurement operator satisfies assumption $(A_3^c)$. The target state, in this basis, is $\rho_d = \mathrm{diag}(1, 0, 0)$. For the feedback operator, we choose 
  \[\vspace{-1mm}  F_0 = \begin{bmatrix}
        0 & 1 & 1 \\ 1 & 0 & 1 \\ 1 & 1 & 0
    \end{bmatrix},\]
which ensures that the only equilibrium for the average dynamics is $\rho_{mix}$. As for the $\beta$ and $\varepsilon$ parameters in the definition (\ref{taus}) of $\tau_s$, we set $\beta = 0.6$ and $\varepsilon = \delta = 3$.

Figure (\ref{3d:fidelity}) reports the average $F(\rho_t) = Tr[\Pi_{\lambda_1}\rho_t]$ trajectory for the different control strategies. $F(\rho) = 1$ if and only if $\rho \in \mathfrak D (\mathcal H_{\lambda_1})$ and $0$ if and only if $\rho$ has no support on $\mathcal H_{\lambda_1}$. 

\begin{figure}
    \centering
    \includegraphics[width=1.0\linewidth]{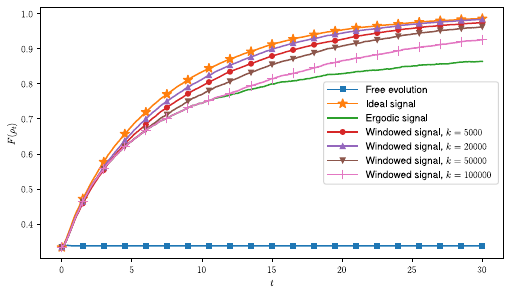}
    \caption{Average $F(\rho_t)$ trajectories for the qutrit system using different control strategies.}
    \label{3d:fidelity}
\end{figure}

All controllers are capable of steering $\rho_t$ towards the right subspace. While the performance of the controller using $\hat x_t$ is lacking, windowed controllers are able to closely approach the ideal behaviour. The figure highlights a non-monotonic dependence on $\Delta$, with an increase in $k$ initially corresponding to improved performance, and then leading to worse behaviour (as expected, since when $\Delta$ increases, $\hat x_t^\Delta$ converges to $\hat x_t$).

\emph{Antiferromagnetic Heisenberg triangle:} The other system of interest is the cooling of an antiferromagnetic Heisenberg triangle, with nearest-neighbour interaction and periodic boundary conditions in the absence of any external magnetic field and subject to continuous homodyne measurement described by the measurement operator $L = H_0$, where the Hamiltonian is computed as \begin{equation}
    H_0 = \sum_{k \in \{x, y, z\}}\left(J_k\sum_{i = 1}^3 S_i^kS_{i + 1\ \mathrm{mod}\ 3}^k\right)
\end{equation}
where we define \(
    S_i^k = \one_2^{\otimes{i - 1}} \otimes \sigma_k \otimes \one_2^{\otimes{3 - i}}
\)
as the operator that acts as the Pauli matrix $\sigma_k$ on site $i$ and as the identity on the rest.

We set $J_x = J_y = 1$ and $J_z = 2$, for which the measurement operator $L$ satisfies assumption $(A_3^c)$. We choose $F_0$ such that, in a basis where $H_0$ is diagonal, it is a tridiagonal matrix with main diagonal set to $0$ and both super- and subdiagonal set to $4$. We choose $\beta = 0.6$ and $\varepsilon = \lambda_2 - \lambda_1 = 6$.
\begin{figure}
    \centering
    \includegraphics[width=1.0\linewidth]{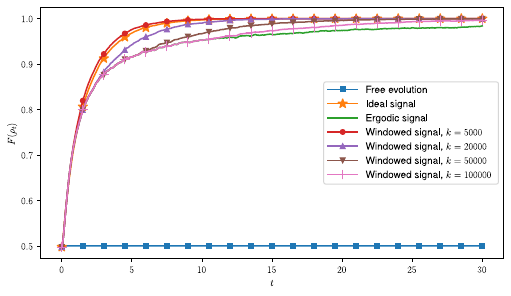}
    \caption{Average $F(\rho_t)$ trajectories for the antiferromagnetic Heisenberg triangle using different control strategies.}
    \label{heis:fidelity}
\end{figure}
Figure (\ref{heis:fidelity}) confirms that our control laws ensure convergence to the desired subspace. We again observe different behaviors for different values of $\Delta$, with $k = 5000$ providing the best performance.

{\em Remark:} Note that $\DHL{\lambda_1}$ is generated by the orthogonal (partially) entangled states $\ket{\psi_1} = \frac{1}{\sqrt{2}}[\ket{010} - \ket{001}]$, $\ket{\psi_2} = \frac{1}{\sqrt{2}}[\ket{101} - \ket{011}]$, $\ket{\psi_3} = \frac{1}{\sqrt{6}}[2\ket{100} - \ket{010} - \ket{001}]$, $\ket{\psi_4} = \frac{1}{\sqrt{6}}[2\ket{110} - \ket{101} - \ket{011}]$. Hence, our simple cooling strategy is also effectively creating non-classical correlations in the system. 

\section{Outlook} In this letter, we propose a quantum feedback scheme that asymptotically prepares an extreme eigenspace of the monitored observables, and suggest a minimal output filter to replace the state one.
The proposed methods approximate an ideal reference situation, where the instantaneous measurement average is available and for which stochastic convergence is proved rigorously, and perform well in simulation - especially after tuning of the approximation parameters (averaging window). We believe that the presented results represent a valuable first step towards scalable feedback design and its application in realistic quantum systems: even the simple 3 spin system of the Heisenberg triangle would be challenging to implement with real-time full state filtering.  In addition, while we presented results using only unit-efficiency measurements, simulations confirm its validity also when dealing with non-ideal measurements, despite a slightly slower convergence speed.
Different extensions of the method are being considered, including non-Hermitian measurement operators, integration with derivative control, analysis of the convergence rate for the approximate methods, generalization to the stabilization of non-extremal set of states, and application to counting-type measurements.

\appendix
\section{Supporting lemmas}
\label{app}
The lemmas in this section are all proved under the assumptions $(A_1), (A_2), (A_3^c)$ and the hypotheses of Theorem (\ref{convthm}).

\begin{lem}
    Set $f_t= 0$ and let $\alpha \in [0, \delta)$. Then, if $\rho_t$ enters $\mathfrak D_{\ge \alpha}$, it must do so in finite time with probability $1$.
    \label{lemfiniteentrance3}
\end{lem}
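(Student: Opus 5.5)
The plan is to reduce the statement to the open-loop convergence result recalled in Section~\ref{idcontr}: with $f_t=0$, $\rho_t$ converges with probability $1$ to one of the invariant sets $\DHL{\lambda}$. I read the claim as follows: outside a null set, if the trajectory does not converge to $\DHL{\lambda_1}$, then there is a finite $t$ with $V(\rho_t)\ge\alpha$. This is also the form in which the lemma is used in the proof of Theorem~(\ref{convthm}). Accordingly, I will fix the full-measure event $\Omega_0$ on which the open-loop convergence holds and $t\mapsto\rho_t$ is continuous, and argue pathwise on $\Omega_0$.

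First, the trivial case $\alpha=0$. Since $V(\rho)=Tr[L\rho]-\lambda_1\ge 0$ for every $\rho\in\mathfrak D(\mathcal H)$, we have $\mathfrak D_{\ge 0}=\mathfrak D(\mathcal H)$, so the entrance time is $0$.

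For $\alpha\in(0,\delta)$, fix $\omega\in\Omega_0$ and let $\DHL{\lambda}$ be the limit set of the trajectory. I will use that $V(\rho)=\lambda-\lambda_1$ for every $\rho\in\DHL{\lambda}$. Since $V$ is linear, hence continuous, and the distance from $\rho_t$ to $\DHL{\lambda}$ tends to $0$, it follows that $V(\rho_t)\to\lambda-\lambda_1$. There are then two cases.
\begin{itemize}
\item If $\lambda\ge\lambda_2$, the limit satisfies $\lambda-\lambda_1\ge\delta>\alpha$. Hence there is a finite $T(\omega)$ with $V(\rho_t)>\alpha$ for all $t\ge T(\omega)$, so the trajectory enters $\mathfrak D_{\ge\alpha}$ by time $T(\omega)<\infty$.
\item If $\lambda=\lambda_1$, then $V(\rho_t)\to 0<\alpha$. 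Any entrance into $\mathfrak D_{\ge\alpha}$ therefore cannot happen only ``at infinity'': it must occur at some finite time.
\end{itemize}
Either way, the hitting time $\tau_\alpha=\inf\{t\ge0: V(\rho_t)\ge\alpha\}$ is a well-defined stopping time. This holds because $t\mapsto V(\rho_t)$ is continuous and adapted and $[\alpha,\infty)$ is closed. On $\Omega_0$, $\tau_\alpha$ is finite whenever the trajectory enters the set at all, and in particular whenever the limit is not $\DHL{\lambda_1}$.

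The main point requiring care is the a.s. open-loop convergence. I will take it from the footnoted argument, which uses the variance process $Tr[L^2\rho_t]-Tr[L\rho_t]^2$ as in \cite{convergence}, together with the fact that the limit sets are exactly the $\DHL{\lambda}$. The only other ingredient is a clean choice of $\Omega_0$, namely the intersection of the convergence event with the event of path continuity. Both are full-measure events, so their intersection has probability $1$, which gives the statement almost surely.
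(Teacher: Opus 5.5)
Your proposal is correct and follows essentially the same route as the paper. Both arguments invoke the almost sure open-loop convergence to some $\DHL{\lambda}$ and note that $V(\rho_t)\to\lambda-\lambda_1\ge\delta>\alpha$ whenever the limit is not $\DHL{\lambda_1}$, so the hitting time of $\mathfrak D_{\ge\alpha}$ is finite there; the $\lambda_1$ case is then handled directly. Your separate treatment of $\alpha=0$ and your check that the hitting time is a genuine stopping time are small clarifications the paper leaves implicit.
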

\begin{proof}
    Recall that the free evolution will converge to one of the $\DHL{\lambda}$ sets with probability $1$, and thus $V(\rho)$ has a limit almost surely. Assume first that the trajectory $\rho_t$ converges to one of the other sets $\mathfrak D(\mathcal H^L_{\lambda})$. 
    Define now the set of elementary events \begin{equation}
        \Omega' = \left\{\omega \in \Omega \,\middle| \lim_{t \rightarrow +\infty} \rho_t \not \in \DHL{\lambda_1}\right\}
    \end{equation}
    and the stopping time \(
        \tau(\omega) = \inf\left\{t \middle| \rho_t(\omega) \in \mathfrak D_{\ge \alpha}\right\}.
    \)
    The statement of the lemma is then equivalent to \begin{equation}
        \mathbb P \left[\{\omega \in \Omega' \middle| \tau(\omega) = +\infty\}\right] = 0 \label{zeromeas}
    \end{equation}

    For almost all $\omega \in \Omega'$, we have that \begin{equation}
        \lim_{t \rightarrow +\infty} V(\rho_t(\omega)) = V\left(\rho_\infty(\omega)\right) = v(\omega)
    \end{equation}
    with $v(\omega) = \lambda - \lambda_1 \ge \delta$ almost surely, where $\lambda \neq \lambda_1$ is one of the eigenvalues of $L$. For all $\omega \in \Omega'$ such that this is true, we have that $\forall \varepsilon > 0, \exists N > 0$ such that, for $ t > N$, $|V(\rho_t(\omega)) - v(\omega)| < \varepsilon$, which can be rewritten as \begin{equation}
        v(\omega) - \varepsilon < V(\rho_t(\omega)) < v(\omega) + \varepsilon
    \end{equation}
    Let $\varepsilon = \delta - \alpha > 0$. Since $v(\omega) \ge \delta$, there exists $\bar N > 0$ such that $t > \bar N$ implies $V(\rho_t(\omega)) > \alpha$ for almost all $\omega \in \Omega'$.

    For all $\omega \in \Omega'$ for which $v(\omega) \ge \delta$, there exists a finite time instant $\bar t$ for which $\rho_{\bar t}(\omega) \in \mathfrak D_{\ge \alpha}$. By definition of $\tau(\omega)$, we have that $\tau(\omega) \le \bar t < +\infty$. But the set of $\omega \in \Omega'$ such that this happens has measure $1$ under the probability measure $\mathbb P[\cdot | \rho_\infty \not \in \DHL{\lambda_1}]$, so equation (\ref{zeromeas}) is verified.

    If we instead assume that $\rho_t$ converges to $\DHL{\lambda_1}$, with a similar reasoning, we can again prove that the trajectory must enter $\mathfrak D_{\ge \alpha}$ in finite time.
\end{proof}

We can also provide a bound on the probability of $\rho_t$ entering $\mathfrak D_{\ge \delta - \frac{\gamma}{2}}$ knowing the initial condition $\rho_0\in\mathfrak D_{\le \delta - \gamma}$: \begin{lem}
    If $\rho_0 \in \mathfrak D_{\le \delta - \gamma}$, then \begin{equation}
        \mathbb P\left[\sup_{0 \le t < +\infty} V(\rho_t) \ge \delta - \frac{\gamma}{2}\right] \le 1 - \frac{\gamma}{2\delta - \gamma} < 1
    \end{equation}
    \label{upboundprob}
\end{lem}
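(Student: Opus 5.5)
The bound will come from a martingale argument on the Lyapunov-type function $V(\rho_t)=Tr[L\rho_t]-\lambda_1$ under the free evolution ($f_t=0$), which is the regime in force whenever $\rho_t\in\mathfrak D_{\le\delta-\gamma}$ until the upper threshold is reached.

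\textbf{Step 1: $V(\rho_t)$ is a bounded martingale.} Under $(A_1)$–$(A_2)$ we have $Tr[L\,(-i[H_0,\rho])]=0$ because $[H_0,L]=0$. Also $Tr[L\,\mathcal D_L[\rho]]=0$ because $L$ is hermitian and commutes with itself. Hence
\begin{equation*}
dV(\rho_t)=Tr[L\,\mathcal G_L[\rho_t]]\,dW_t = 2\bigl(Tr[L^2\rho_t]-Tr[L\rho_t]^2\bigr)dW_t,
\end{equation*}
so $V(\rho_t)$ is a martingale. It takes values in $[0,\lambda_r-\lambda_1]$ and is therefore bounded.

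\textbf{Step 2: optional stopping at the hitting time.} Let $\tau=\inf\{t:V(\rho_t)\ge\delta-\gamma/2\}$ and consider the stopped martingale $M_t=V(\rho_{t\wedge\tau})$. It is bounded, so it converges a.s. and in $L^1$, and optional stopping gives $\mathbb E[M_\infty]=V(\rho_0)\le\delta-\gamma$. On $\{\tau<\infty\}$ continuity of paths gives $M_\infty=\delta-\gamma/2$. On $\{\tau=\infty\}$ the free trajectory converges to some $\DHL{\lambda}$; since $V<\delta-\gamma/2<\delta$ along the whole path, the limit must be $\DHL{\lambda_1}$, so $M_\infty=0$. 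Writing $p=\mathbb P[\tau<\infty]$, we get $p(\delta-\gamma/2)\le\delta-\gamma$, i.e.
\begin{equation*}
p\le\frac{\delta-\gamma}{\delta-\gamma/2}=\frac{2\delta-2\gamma}{2\delta-\gamma}=1-\frac{\gamma}{2\delta-\gamma}<1.
\end{equation*}
The event $\{\sup_t V(\rho_t)\ge\delta-\gamma/2\}$ coincides with $\{\tau<\infty\}$ because the paths are continuous, which gives the statement.

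\textbf{Main obstacle.} The delicate point is identifying $M_\infty$ on $\{\tau=\infty\}$, which requires the a.s. convergence of the open-loop SME to one of the invariant sets $\DHL{\lambda}$ recalled in Section~\ref{idcontr}. One also has to check that the controller really is off up to $\tau$: starting in $\mathfrak D_{\le\delta-\gamma}$, $f=0$ by definition until the state reaches $\mathfrak D_{\ge\delta-\gamma/2}$, including the hysteresis band. That is exactly why the lemma is a free-evolution statement. The drift computation in Step 1 is routine given the assumptions.
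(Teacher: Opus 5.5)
Your argument is correct and is essentially the paper's: the paper just invokes Theorem (2.2) of the Mirrahimi--van Handel reference. That result is Kushner's inequality $\mathbb P[\sup_t V(\rho_t)\ge\lambda]\le V(\rho_0)/\lambda$ for $V\ge 0$ with $\mathscr L V\le 0$ on $\{V<\lambda\}$, applied with $\lambda=\delta-\frac{\gamma}{2}$, and your optional-stopping computation is exactly its proof. Two refinements: the step you flag as delicate is not needed for the bound, since $M_\infty\ge 0$ already gives $p(\delta-\frac{\gamma}{2})\le V(\rho_0)$. The open-loop convergence is really only needed to identify $\{\sup_t V(\rho_t)\ge\delta-\frac{\gamma}{2}\}$ with $\{\tau<\infty\}$ up to a null set, which continuity alone does not give (the supremum could be approached only as $t\to\infty$); this can also be bypassed with a threshold $\lambda'<\delta-\frac{\gamma}{2}$ and $\lambda'\uparrow\delta-\frac{\gamma}{2}$.
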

\begin{proof}
    The proof is an application of Theorem (2.2) in \cite{mirrahimi}, with $\lambda = \alpha = \delta - \frac{\gamma}{2}$. Since $\rho_0 \in \mathfrak D_{\le \delta - \gamma}$, the control function $f$ is zero until $\rho_t$ enters $\mathfrak D_{\le \delta - \frac{\gamma}{2}}$, so the derivative of $V(\rho)$ along the trajectories is $\mathscr L V(\rho) = 0$. Hence, we can apply Theorem (2.2), bounding $V(\rho_0)$ by $\delta - \gamma$.
\end{proof}

When the control function $f(x)$ is set to $1$, the state $\rho_t$ exits from the set $\mathfrak D_{\ge \delta - \gamma}$ in finite time with probability $1$ .
\begin{lem}
    If $f_t = 1$ for all times, there exists a finite time $T > 0$ such that \begin{equation}
        \zeta_T(\rho) = \min_{t \in [0, T]}\mathbb E[V(\rho_t)] < \delta, \forall \rho \in \mathfrak D_{\ge \delta} \label{zetafn}
    \end{equation}
    with $\rho_t$ starting from $\rho_0 = \rho$.
    \label{Texists}
\end{lem}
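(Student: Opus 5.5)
The plan is to reduce everything to the averaged (Lindblad) dynamics and then use compactness. The key observation is that $V$ is linear in $\rho$, so $\mathbb E[V(\rho_t)] = V(\mathbb E[\rho_t])$. When $f_t = 1$ for all times, taking expectations in \eqref{model:ctrl} kills the It\^o term $\mathcal G_L[\rho_t]dW_t$. Hence $\sigma_t = \mathbb E[\rho_t]$ solves the average equation \eqref{lind:avg}, i.e. $\sigma_t = e^{t\mathscr L}\rho$. The quantity $\zeta_T(\rho)$ is therefore the deterministic function $\min_{t\in[0,T]} V(e^{t\mathscr L}\rho)$.

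First, I show that for each fixed $\rho$ the semigroup drives $V$ below $\delta$ in finite time. By hypothesis $\rho_{mix}$ is the unique equilibrium of $\mathscr L$. For a finite-dimensional CPTP semigroup with a unique stationary state, I would argue that $e^{t\mathscr L}\rho \to \rho_{mix}$ for every initial $\rho$. The Ces\`aro averages always converge to a stationary state, which must be $\rho_{mix}$. To upgrade to actual convergence, I must exclude nonzero purely imaginary eigenvalues. Here I would use the standard fact that peripheral eigenvectors of a semigroup with a faithful stationary state generate further stationary states/invariant structure. With $\rho_{mix}$ faithful and unique, this forces the peripheral spectrum to be $\{0\}$ with a one-dimensional eigenspace. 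Then
\[
\lim_{t\to\infty}V(e^{t\mathscr L}\rho)=V(\rho_{mix})=\frac{Tr[L]}{N}-\lambda_1<\lambda_2-\lambda_1=\delta,
\]
by assumption $(A_3^c)$. So for each $\rho$ there is $T_\rho$ with $V(e^{T_\rho\mathscr L}\rho)<\delta$.

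Second, I make $T$ uniform over $\mathfrak D_{\ge\delta}$. This set is closed and bounded in finite dimension, hence compact. The map $\rho\mapsto V(e^{T_\rho\mathscr L}\rho)$ is continuous, so the strict inequality holds on an open neighborhood $U_\rho$. Extracting a finite subcover $U_{\rho_1},\dots,U_{\rho_m}$ and setting $T=\max_j T_{\rho_j}$ gives, for every $\rho\in\mathfrak D_{\ge\delta}$, some $t\le T$ with $V(e^{t\mathscr L}\rho)<\delta$. Hence $\zeta_T(\rho)<\delta$. There is a simpler alternative: primitivity gives uniform exponential convergence $\|e^{t\mathscr L}\rho-\rho_{mix}\|\le Ce^{-\mu t}$. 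Then $T$ can be chosen explicitly so that $\|L\|Ce^{-\mu T}<\delta-V(\rho_{mix})$.

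The main obstacle is the spectral step, namely showing that uniqueness of the equilibrium implies convergence and not merely Ces\`aro convergence. I would first try a direct route. Since $L$ is Hermitian, $\mathcal D_L$ is unital, so the dynamics is unital and self-dual up to the Hamiltonian part. The Lindblad dissipation functional $\frac{d}{dt}Tr[\sigma_t^2] = -\|[L,\sigma_t]\|_2^2\le 0$ (purity is nonincreasing) then gives a Lyapunov function. I would apply LaSalle's invariance principle: on the $\omega$-limit set, $[L,\sigma]=0$ and the evolution is unitary under $H_0+F_0$. Any such periodic orbit would produce a non-mixed invariant set. Its time average would then yield an equilibrium different from $\rho_{mix}$ unless the orbit is trivial, which would contradict uniqueness.
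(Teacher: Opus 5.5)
Your proposal is correct and follows the paper's proof: you reduce to $\sigma_t=e^{t\mathscr L}\rho$ by linearity of $V$, use that a unique stationary state of a finite-dimensional Lindblad semigroup is globally attractive together with $V(\rho_{mix})<\delta$ from $(A_3^c)$, and then obtain a uniform $T$ by compactness of $\mathfrak D_{\ge\delta}$. The paper imports the attractivity fact from Theorem 1 of its reference, which is your peripheral-spectrum fact, and gets uniformity from a modal bound maximized over $\mathfrak D_{\ge\delta}$, which is your exponential-convergence alternative.

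One caution: do not rely on your LaSalle ``direct route'' as written. The time average of a nontrivial unitary orbit inside $\{[L,\sigma]=0\}$ can be exactly $\rho_{mix}$ (e.g.\ $\sigma=\rho_{mix}+\epsilon X$ with $X$ off-diagonal in the eigenbasis of $H_0+F_0$), so that average does not contradict uniqueness. To close the route, note that each frequency component $X_\omega$ of such an orbit satisfies $[L,X_\omega]=0$ and $[H_0+F_0,X_\omega]=\omega X_\omega$. Hence $X_\omega^\dagger X_\omega$ is stationary and therefore $\propto\one$, so $X_\omega\propto U$ with $U$ unitary and $U^\dagger(H_0+F_0)U=H_0+F_0+\omega\one$. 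Taking traces forces $\omega=0$.
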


\begin{proof}
    Being $\mathbb E[V(\rho)] = V(\mathbb E[\rho])$, we can rewrite the function $\zeta_T(\rho)$ as \begin{equation}
        \zeta_T(\rho) = \min_{t \in [0, T]}V(\sigma_t)
    \end{equation}
    where $\sigma_t = \mathbb E[\rho_t]$ is the average trajectory of the system starting from $\sigma_0 = \rho_0 = \rho$, evolving according to (\ref{lind:avg}).

    $V(\sigma_t)$ is a continuous function of time, being both $V(\rho)$ and $\sigma_t$ continuous functions. By Weierstrass theorem, it admits a minimum on any compact subset of $\mathbb R$. Any interval $[0, T]$ is a compact subset of $\mathbb R$ if $T$ is finite, so $\zeta_T(\rho)$ is well-defined.

    Under the assumption that $\rho_{mix}$ is the unique equilibrium of the Lindblad equation, it can be shown (see e.g. Theorem (1) of \cite{uniqueness}) that it is also globally asymptotically stable. Combining this with the continuity of $V$, we obtain that \begin{equation}
        \lim_{t \rightarrow +\infty} V(\sigma_t) = V(\rho_{mix}) = \delta_1 < \delta
    \end{equation}
    where $\delta_1 < \delta$ by assumption $(A^c_3)$.

    We now consider the function $V_1(\sigma_t) = V(\sigma_t) - \delta_1$, which converges to $0$. It can be rewritten as a linear function of the displacement $\Delta \sigma_t = \sigma_t - \rho_{mix}$: \begin{align}
        V_1(\sigma_t) &= Tr[L(\sigma_t - \rho_{mix})] \nonumber \\ &= Tr[L\Delta \sigma_t] = V_2(\Delta \sigma_t) \label{linearv}
    \end{align}

    The time evolution of $\Delta \sigma_t$ is $\frac{d}{dt}\Delta \sigma_t = \mathscr L[\Delta\sigma_t]$. Using the linearity of $V_2(\Delta \sigma)$, we can write its evolution as a linear combination of the system modes, so we can bound the absolute value of $V_2(\Delta \sigma)$ as 
    \(
        |V_2(\Delta \sigma_t)| \le \sum_i c_t^i e^{\mu_i t} \)
    where $\{\mu_i\}$ are the eigenvalues of $\mathscr L$ and $c^i_t$ are polynomial functions of time, continuously depending on the initial condition $\rho_0$. In particular, $\mu_1 = 0$ and $c^1_t = 0$, while all the other eigenvalues have strictly negative real part, since the system is asymptotically stable. The set $\mathfrak D_{\ge \delta}$ is closed, since it is a super-level set of a positive semidefinite function, and bounded, since it is a subset of $\mathfrak D(\mathcal H)$, itself bounded, hence it is a compact set. By Weierstrass theorem, the polynomia $c_t^i$ can be maximised with respect to $\rho_0 \in \mathfrak D_{\ge \delta}$, leading to \begin{equation}
        |V_2(\Delta \sigma_t)| \le g(t) = \sum_i c_t^{i, max}e^{\mu_it}. \label{boundV2}
    \end{equation}
    
    Let $\delta_2 = \delta - \delta_1 > 0$. If we impose $|V_2(\Delta \sigma_t)| < \delta_2$, this implies that $V(\sigma_t) < \delta_1 + \delta_2 < \delta$, so it also implies that $\sigma_t$ is outside the set $\mathfrak D_{\ge \delta}$.

    For all $c^{i, max}_t \neq 0$, the corresponding $\mu_i$ has strictly negative real part, so  $c^{i, max}_te^{\mu_it}$ converges asymptotically to $0$, so $g(t)$ converges asymptotically to $0$ and we can find a $\bar T > 0$ such that $g(t) < \delta_2$ for all $t > \bar T$. By expression (\ref{boundV2}), $\sigma_t$ does not belong to $\mathfrak D_{\ge \delta}$ for all $t > \bar T$. This $\bar T$ does not depend on the initial condition $\rho_0$, since we are dealing with the maximised bounding function. So, for all $T > \bar T$, the function $\zeta_T(\rho)$ is strictly smaller than $\delta$ for any initial state $\rho \in \mathfrak D_{\ge \delta}$.
\end{proof}
\noindent We can now state the last two Lemmas. 
\begin{lem}
    There exists a $\gamma \in (0, \delta)$ such that, for some $T > 0$, $\zeta_T(\rho) < \delta - \gamma$ for all $\rho \in \mathfrak D_{> \delta - \gamma}$. \label{lem45mod}
\end{lem}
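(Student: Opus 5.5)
We will obtain Lemma (\ref{lem45mod}) from Lemma (\ref{Texists}) by a uniform-continuity and compactness argument. The approach is to strengthen the conclusion of Lemma (\ref{Texists}) into a uniform margin below $\delta$, and then choose $\gamma$ smaller than that margin.

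\emph{Step 1: uniform margin.} Recall that $\sigma_t = e^{t\mathscr L}[\rho]$ with $\mathscr L$ the generator in (\ref{lind:avg}). Fix $T$ as provided by Lemma (\ref{Texists}). The map $\rho \mapsto \zeta_T(\rho) = \min_{t\in[0,T]} V(e^{t\mathscr L}[\rho])$ is continuous on $\mathfrak D(\mathcal H)$. Indeed, $(t,\rho)\mapsto V(e^{t\mathscr L}[\rho])$ is jointly continuous, hence uniformly continuous on the compact set $[0,T]\times\mathfrak D(\mathcal H)$, and a minimum over a compact parameter set preserves continuity. Since $\mathfrak D_{\ge\delta}$ is compact, $\zeta_T$ attains its maximum there. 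By Lemma (\ref{Texists}) this maximum is some $\delta - \eta$ with $\eta > 0$.

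\emph{Step 2: extend to a neighbourhood.} We would like $\zeta_T(\rho) < \delta - \gamma$ also for $\rho$ with $V(\rho)\in(\delta-\gamma,\delta)$. We could argue through uniform continuity and the distance to $\mathfrak D_{\ge\delta}$, but linearity gives a cleaner estimate. Write $\rho = \rho_{mix} + \Delta\rho$. The bound $|V_2(e^{t\mathscr L}\Delta\rho)| \le g(t)$ from the proof of Lemma (\ref{Texists}) holds, with the same bounding construction, for all $\rho$ in the compact set $\mathfrak D_{\ge \delta/2}$ (or even all of $\mathfrak D(\mathcal H)$). Note that $\delta_1 = V(\rho_{mix}) < \delta$ by $(A_3^c)$. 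We may therefore choose $\bar T$ independent of $\rho$ such that $V(\sigma_{\bar T}) < \delta_1 + \tfrac{\delta-\delta_1}{2}$ for every initial state $\rho \in \mathfrak D(\mathcal H)$. Then $\zeta_{\bar T}(\rho) \le \tfrac{\delta+\delta_1}{2}$ for all $\rho$.

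\emph{Step 3: choose $\gamma$.} Pick $\gamma \in \left(0, \tfrac{\delta-\delta_1}{2}\right)$, which lies in $(0,\delta)$. Then for every $\rho$, in particular every $\rho \in \mathfrak D_{>\delta-\gamma}$,
\[
\zeta_{\bar T}(\rho) \le \tfrac{\delta+\delta_1}{2} = \delta - \tfrac{\delta-\delta_1}{2} < \delta - \gamma .
\]
This yields the claim with $T = \bar T$.

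The main obstacle is the uniformity of the time $T$ over initial states. This is exactly what the mode bound (\ref{boundV2}) delivers, once it is applied on all of the compact set $\mathfrak D(\mathcal H)$ rather than only on $\mathfrak D_{\ge\delta}$. If one prefers to rely only on the stated Lemma (\ref{Texists}), Step 2 can instead be carried out via the uniform continuity of $\zeta_T$ on $\mathfrak D(\mathcal H)$. That route needs an additional compactness argument, namely that $\{\rho : V(\rho) \ge \delta-\gamma\}$ shrinks to $\mathfrak D_{\ge\delta}$ in Hausdorff distance as $\gamma \to 0$, which holds since $V$ is continuous on a compact set. One then takes $\gamma$ small enough that the oscillation of $\zeta_T$ on this shrinking neighbourhood is below $\eta/2$ and $\gamma < \eta/2$.
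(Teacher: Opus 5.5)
Your proof is correct, but its main line (Steps 2--3) differs from the paper's. The paper transplants Lemma~4.5 of \cite{mirrahimi}. It keeps the $T$ of Lemma~(\ref{Texists}) and notes that $\zeta_T(\rho)=\min_{t\in[0,T]}V(e^{t\mathscr L}[\rho])$ is continuous in $\rho$; here this is immediate because $\mathbb E$ and $V$ commute by linearity. It then extends the strict inequality $\zeta_T<\delta$ from the compact set $\mathfrak D_{\ge\delta}$ to a thin set $\mathfrak D_{>\delta-\gamma}$ by compactness and uniform continuity. That is essentially your Step~1 together with the fallback in your last paragraph.

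Your primary route instead observes that the mode bound (\ref{boundV2}) uses only compactness of the set of initial conditions, so it holds on all of $\mathfrak D(\mathcal H)$. The traceless displacement $\rho-\rho_{mix}$ therefore decays uniformly. Combined with $(A_3^c)$, this gives $V(\sigma_{\bar T})<\delta-\gamma$ at one fixed time for \emph{every} initial state. The neighbourhood-extension step, and your Step~1, then become unnecessary.

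What your route buys is an explicit and essentially sharp range for $\gamma$. Let $\delta_1=Tr[L]/N-\lambda_1$ and let $\bar T$ depend on $\gamma$. Then any $\gamma\in(0,\delta-\delta_1)$ works, not just $(0,(\delta-\delta_1)/2)$. For $\gamma>\delta-\delta_1$ the statement fails: $\rho_{mix}$ lies in $\mathfrak D_{>\delta-\gamma}$ and is an equilibrium with $\zeta_T(\rho_{mix})=\delta_1>\delta-\gamma$. This gives a concrete guide for placing the hysteresis thresholds.

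The paper's route needs only the qualitative conclusion of Lemma~(\ref{Texists}) and continuity of $\zeta_T$. It would therefore still work where the averaged dynamics has no globally attracting point below the threshold, but it says nothing about how small $\gamma$ must be.
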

\noindent The proof follows the one of Lemma (4.5) in \cite{mirrahimi}, substituting the set $\mathfrak D_{> 1 - \xi}$ with the set $\mathfrak D_{> \delta - \xi}$ and exchanging the order of $V$ and $\mathbb E$ by linearity. The $T > 0$ of the new statement is the one obtained by applying Lemma (\ref{Texists}). In a similar fashion, we can obtain a revised version of Lemma (4.6) of \cite{mirrahimi}: \begin{lem}

    Using the $\gamma$ of Lemma (\ref{lem45mod}), let $\tau_{\rho}(\mathfrak D_{> \delta - \gamma})$ be the first exit time of the actual trajectory $\rho_t$ from $\mathfrak D_{> \delta - \gamma}$ starting from $\rho_0 = \rho$. Then 
    \begin{equation}
        \sup_{\rho \in \mathfrak D_{> \delta - \gamma}} \mathbb E[\tau_{\rho}(\mathfrak D_{> \delta - \gamma})] < +\infty
    \end{equation}
    This implies in turn that $\tau_\rho(\mathfrak D_{> \delta - \gamma})$ is a.s. finite.
    \label{lem46mod}
\end{lem}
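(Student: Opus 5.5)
We follow the scheme of Lemma (4.6) in \cite{mirrahimi}: first obtain a uniform-in-initial-condition lower bound on the probability of exiting $\mathfrak D_{> \delta - \gamma}$ within a fixed horizon $T$, then iterate it with the strong Markov property to get a geometric tail for the exit time. Throughout, the dynamics are those with $f_t = 1$. Inside $\mathfrak D_{> \delta - \gamma} \subset \mathfrak D_{\ge \delta - \frac{\gamma}{2}} \cup \{V \in (\delta-\gamma,\delta-\frac{\gamma}{2})\}$ the control stays on as long as the trajectory has not left the set, so the hysteresis plays no role before the exit time.

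\textbf{Step 1: uniform exit probability.} Take $\gamma$ and $T$ from Lemma (\ref{lem45mod}), so that $\zeta_T(\rho) < \delta - \gamma$ for every $\rho \in \mathfrak D_{> \delta - \gamma}$. The first task is to upgrade this to a uniform margin:
\[
\sup_{\rho \in \overline{\mathfrak D_{> \delta - \gamma}}} \zeta_T(\rho) \le \delta - \gamma - \eta
\]
for some $\eta > 0$. We use that $\zeta_T(\rho) = \min_{t\in[0,T]} V(\sigma_t)$, where $\sigma_t = e^{t\mathscr L}\rho$ depends continuously on $(t,\rho)$. Hence $\zeta_T$ is continuous, and it attains its maximum on the compact closure $\{V \ge \delta - \gamma\}$. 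On the boundary $V = \delta - \gamma$ we need strictness, so if necessary we shrink $\gamma$ slightly, or apply Lemma (\ref{lem45mod}) on a slightly larger set.

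Now fix $\rho$ and choose $t^* \le T$ with $\mathbb E[V(\rho_{t^*})] \le \delta - \gamma - \eta$. On the event $\{\tau_\rho > T\}$ the process stays in the set up to $t^*$, so $V(\rho_{t^*}) > \delta - \gamma$ there. Since $V \ge 0$ and $V \le \lambda_r - \lambda_1 =: M$, we get
\[
\delta-\gamma-\eta \ \ge\ \mathbb E[V(\rho_{t^*})] \ \ge\ (\delta-\gamma)\,\mathbb P[\tau_\rho > t^*] .
\]
This yields the uniform bound $\mathbb P[\tau_\rho > T] \le 1 - \frac{\eta}{\delta-\gamma} =: p < 1$.

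There is one subtlety: $\mathbb E[V(\rho_{t^*})]$ here must be computed for the process with $f=1$ for all times, which matches Lemma (\ref{Texists}). We therefore compare the actual trajectory with the always-controlled one, which coincide up to $\tau_\rho$, so the inequality above only uses the event on which they agree.

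\textbf{Step 2: iteration.} By the strong Markov property of the SME solution at times $kT$, on $\{\tau_\rho > kT\}$ we have $\rho_{kT} \in \mathfrak D_{> \delta - \gamma}$. Applying Step 1 from this new initial condition gives $\mathbb P[\tau_\rho > (k+1)T] \le p\,\mathbb P[\tau_\rho > kT]$, and hence $\mathbb P[\tau_\rho > kT] \le p^k$. Summing the tail,
\[
\mathbb E[\tau_\rho] \le T\sum_{k\ge0} p^k = \frac{T}{1-p},
\]
uniformly in $\rho$. A.s. finiteness then follows from Markov's inequality.

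\textbf{Main obstacle.} The main obstacle is Step 1's uniform margin $\eta$, that is, the continuity and compactness argument near the boundary $V = \delta - \gamma$, where Lemma (\ref{lem45mod}) only gives a strict but possibly non-uniform inequality. To handle it, we first apply Lemma (\ref{lem45mod}) with a slightly larger $\gamma' > \gamma$ and then work on $\mathfrak D_{> \delta-\gamma}$, which sits strictly inside the closed set $\{V \ge \delta-\gamma'\}$.
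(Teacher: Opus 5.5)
Your proposal is correct and takes the same route the paper does by deferring to Lemma 4.6 of \cite{mirrahimi}: a uniform bound $\sup_{\rho}\mathbb P[\tau_\rho > T] \le p < 1$, obtained from $\zeta_T$ via $\mathbb E[V(\rho_t)] \ge (\delta-\gamma)\,\mathbb P[\tau_\rho > t]$, followed by a geometric tail from the Markov property at the times $kT$. Your boundary fix is sound because the conclusion of Lemma \ref{lem45mod} passes to every smaller $\gamma$: if it holds for $\gamma'$, then for any $\gamma<\gamma'$ one gets $\zeta_T < \delta-\gamma-(\gamma'-\gamma)$ on $\mathfrak D_{>\delta-\gamma}$, so the margin $\eta=\gamma'-\gamma$ comes for free without any continuity argument; state this explicitly, since that lemma only asserts the existence of some $\gamma$.
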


\bibliography{ref}

\vfill

\end{document}

%% file: header_traditional.tex
\usepackage{amsfonts,amsmath,amssymb}
\usepackage[english]{babel}
\usepackage[normalem]{ulem}
\usepackage{hyperref}

\usepackage{amsthm}
\usepackage{graphicx}
\usepackage{subcaption,caption}
\usepackage{epsfig}
\usepackage{ulem}
\usepackage{tikz}
\usepackage{mathrsfs}
\usepackage{bbold}             
\usepackage{bm}
\usetikzlibrary{shapes,arrows}
\usetikzlibrary{matrix}
\usetikzlibrary{calc,patterns,angles,quotes,arrows,automata}
\graphicspath{ {./imgs/}}

\usepackage[linesnumbered,noend]{algorithm2e}
\RestyleAlgo{ruled}
\SetKwInOut{Input}{Input}
\SetKwInOut{Output}{Output}
\SetKwInOut{Parameters}{Parameters}
\SetKwIF{If}{ElseIf}{Else}{if}{:}{else if}{else}{endif}

\theoremstyle{definition}

\usepackage{mathabx}

\renewcommand{\bar}{\overline}

\theoremstyle{theorem}

\newtheorem{theorem}{Theorem}

\theoremstyle{definition}

\theoremstyle{remark}

\usepackage[numbers,sort&compress]{natbib}